\documentclass[letterpaper, 10 pt, conference]{ieeeconf}      
\IEEEoverridecommandlockouts                              
\usepackage{amsthm}
\usepackage{cite}\usepackage{hyperref}
\usepackage{algorithmic}
\usepackage{graphicx}
\usepackage{textcomp}
\usepackage{amsmath,amssymb,bm,bbm,mathrsfs,amscd}
\usepackage{calc}
\usepackage{color}
\usepackage{dsfont}
\usepackage{graphicx}
\usepackage{epstopdf}
\usepackage{epsfig}
\usepackage{tikz}
\usepackage{amsfonts}
\usepackage{rotating}
\usepackage{mathtools}
\usepackage{color}
\usepackage{subfig}
\usepackage{enumerate,pgfplots}
\usepackage[font=scriptsize]{caption}
\newtheorem{theorem}{Theorem}

\newtheorem{proposition}{Proposition}
\newtheorem{lemma}{Lemma}

\newtheorem{remark}{Remark}
\newtheorem{assumption}{Assumption}
\newcommand{\ba}{\begin{array}}
\newcommand{\ea}{\end{array}}

\newcommand{\be}{\begin{equation}}
\newcommand{\ee}{\end{equation}}

\newcommand{\ds}{\displaystyle}

\newcommand{\mc}{\mathcal}

\def\1{\boldsymbol{1}}

\def\diag{{\rm diag}\,}
\tikzstyle{v_c}=[circle, draw,inner sep=2pt, minimum width=12pt, color=blue]
\tikzstyle{v_a}=[circle, draw,inner sep=2pt, minimum width=12pt, color=red]
\tikzstyle{edge} = [draw,thick,-,font=\small ]
\tikzstyle{label} = [draw,fill=black,font=\normalsize]

\def\BibTeX{{\rm B\kern-.05em{\sc i\kern-.025em b}\kern-.08em
    
T\kern-.1667em\lower.7ex\hbox{E}\kern-.125emX}}
	\title{\LARGE \bf On Leadership Emergence in Opinion Dynamics on Social Networks
}
	\author{Martina~Alutto, Lorenzo Zino, Karl H. Johansson, and Angela Fontan
\thanks{M.~Alutto, K. H. Johansson and A. Fontan are with the Department of Decision and Control Systems, School of Electrical Engineering and Computer Science, KTH Royal Institute of Technology, Stockholm, Sweden (\{alutto; angfon; kallej\}@kth.se). They are also affiliated with the research center Digital~Futures,~Stockholm,~Sweden.
L. Zino is with the Department of Electronics and Telecommunications, Politecnico di Torino, Italy (lorenzo.zino@polito.it). This work was supported by the Wallenberg AI, Autonomous Systems and Software Program (WASP) funded by the Knut and Alice Wallenberg Foundation.}
}

\begin{document}

\maketitle
\thispagestyle{empty}
\pagestyle{empty}

\begin{abstract}
	Leadership in social groups emerges dynamically through interaction and opinion exchange. Empirical evidence indicates that individuals expressing strong opinions tend to gain influence, while sustained leadership critically depends on maintaining alignment with the surrounding social context. Motivated by these observations, we introduce a coupled dynamical model describing the simultaneous evolution of opinions and leadership in a networked population. Extending the Friedkin-Johnsen framework, we represent leadership as a time-varying susceptibility to social influence, which evolves according to a game-theoretic mechanism, consistent with social psychology evidence. Within this setting, agents strengthen their leadership by expressing decisive yet socially coherent opinions, whereas misalignment with the collective state results in a loss of influence. We analyze the coupled dynamics and establish sufficient conditions to identify which agents necessarily emerge as leaders and which act as followers in the social network.
\end{abstract}

\section{Introduction}

Mathematical modeling of complex social systems is an area of growing interest thanks to its ability to establish tools to increase the understanding of our societies, predict their emergent behavior, and design intervention policies to guide them to desired collective behavior. The systems and control community has developed analytical tools to study complex social phenomena, particularly through the lens of opinion dynamics~\cite{friedkin2015_socialsurvey,proskurnikov2017tutorial,anderson2019IJAC,Mei2022,Bernardo2024}, helping unveiling the mechanisms of critical societal phenomena such as the emergence of polarization and social clustering~\cite{Leonard2021,DePasquale2022}, pluralistic ignorance~\cite{Hassan2023tac}, persistent fluctuations~\cite{Acemoglu2013}, and decision-making processes in the presence of antagonistic interactions~\cite{fontan2021role}.

A key feature of social systems is leadership, which is the ability to influence others in social groups. Leadership has a key role in shaping the collective behavior of the group. Several theoretical and empirical studies from the social psychology literature suggest that leadership is not a static feature~\cite{Badura2022,klein2023respondents}. In fact, leaders in social groups dynamically emerge through interactions and opinion exchange~\cite{Nakayama2019,klein2023respondents}. On the one hand, people with strong and extreme beliefs are typically more confident in sharing their opinions and more committed. Hence, they tend to become more prominent in group discussions and are less susceptible to social influence~\cite{VanSwol01072009}, as confirmed by empirical studies~\cite{brandt2015unthinking}. On the other hand, sustained leadership critically depends on maintaining alignment with the surrounding social context~\cite{isenberg1986group,Surowiecki2004,hillman2023social}. In summary, leadership shapes the dynamics of social networks and dynamically emerges from nontrivial mechanisms, driven by social interactions and opinion sharing.

The spontaneous emergence of leadership in social networks is underexplored. This is mostly because classical models typically assume that leadership and susceptibility to others' opinions are constant~\cite{friedkin2015_socialsurvey,proskurnikov2017tutorial}. Some efforts have recently been made to account for dynamically changing social influence by means of appraisal dynamics~\cite{jia2015opinion,ohlin2022achieving,tian2021social,wang2024social,Kang2022,Liu2024}. However, in these works, the emergence of leadership is modeled by assuming exogenous (and typically constant) differences in stubbornness or authority, but with no impact of the social interactions. Recently, in~\cite{Alutto2026ifac}, a co-evolutionary model of leadership emergence and opinion formation was proposed, and some preliminary results have been established under time-scale separation assumptions. However, the complexity of that model limits the possibility of using it to investigate the concurrency of leadership emergence and opinion formation, calling for the development of new modeling approaches.

In this paper, we fill this gap by proposing and analyzing a novel framework that captures leadership emergence in opinion dynamics on social networks. Specifically, grounded in the social psychology literature on leadership emergence~\cite{VanSwol01072009,brandt2015unthinking,Surowiecki2004,isenberg1986group}, we propose a game-theoretic formulation of leadership in social systems and we use a replicator equation to capture its dynamic evolution~\cite{taylor1978replicator,Sandholm2010}. Then, we couple this equation with a classical opinion dynamics model, namely, Taylor's model \cite{taylor1968towards}, which is the continuous-time counterpart of the well-known Friedkin--Johnsen model~\cite{Friedkin1990}. This coupling yields a system of nonlinear ordinary differential equations (ODEs), in which the susceptibility to peer influence in the opinion dynamics is affected by the agent's leadership, while the latter is impacted by the strength of the agent's opinion and the willingness to align with their social circle, with a tradeoff that is parsimoniously regulated by a parameter capturing the agent's behavioral and cognitive characteristics. 

After formalizing the coupled model of leadership emergence and opinion formation, we establish general well-posedness results and we provide a characterization of equilibrium points. Then, after illustrating the richness of the system's behavior via numerical simulations, we focus our analysis on a key problem in social networks: determining who will emerge as a leader within a social group. 
In particular, by combining positively invariant set arguments with a careful analysis of the sign of the leadership dynamics, we derive sufficient conditions for some agents to asymptotically emerge as leaders. These conditions depend on the agents' characteristics, their initial state, and the initial opinions of their neighbors. Once a set of leaders have emerged, we also establish sufficient conditions to identify those who will necessarily act as followers. 

The remainder of the paper is organized as follows. In Section~\ref{sec:model}, we present the mathematical model. In Section~\ref{sec:general}, we present some general results on the system and illustrate its emergent behavior through some simulations. In Section~\ref{sec:results}, we present our main theoretical results. Section~\ref{sec:conclusion} concludes the paper and outlines future research directions. 

\section{Leadership-opinion Model}\label{sec:model}
\subsection{State variables}
We consider a social network composed of $n$ agents. The interaction structure is described by a matrix $W \in \mathbb{R}^{n \times n}$, which is assumed to be row-stochastic, i.e., $\sum_{j=1}^n W_{ij} = 1$ for all $i=1, \dots, n$. The entry $W_{ij}$ represents the weight that agent $i$ assigns to the opinion of agent $j$. 
Each agent $i \in \{1,\dots,n\}$ is characterized by two state variables:
\begin{itemize}
	\item An opinion variable $x_i(t) \in [-1,1]$, representing the belief of agent $i$ on a given topic;
	\item A leadership variable $y_i(t) \in [0,1]$, representing the degree of leadership of agent $i$, whereby in the limit cases, $y_i(t)=0$ means that agent $i$'s opinion is solely influenced by the opinions shared by others, while $y_i(t)=1$ means that $i$ acts as a leader and is not influenced by others.
\end{itemize}

Here, consistent with the literature~\cite{Surowiecki2004,isenberg1986group,VanSwol01072009,brandt2015unthinking,klein2023respondents,hillman2023social}, we assume that both the opinion and the leadership evolve over time, and we denote by $z_i(t)=(x_i(t),y_i(t))$ the variable that characterizes the state of agent $i$ at time $t\geq0$. The states of all agents are gathered in the matrix $z(t)=(x(t),y(t))\in [-1,1]^n\times[0,1]^n$.

\subsection{Game-theoretic model of leadership}
We formulate the leadership dynamics according to a game-theoretic mechanism. In particular, for a generic agent~$i$, we define two payoff functions associated with acting as leader and as follower, denoted by $u_{i,1}(t)$ and  $u_{i,0}(t)$, respectively. Consistent with the literature~\cite{VanSwol01072009,brandt2015unthinking}, the reward associated with acting as a leader increases with the strength of the agent's opinion, either in support or against the specific topic considered. This effect can be captured by the squared 2-norm of the opinion, namely,
\begin{subequations}\label{eq:reward}\begin{equation}\label{eq:reward1}
    u_{i,1}(z)=\rho_i x_i^2,
\end{equation}
where $\rho_i>0$ is a strictly positive coefficient that measures the strength of this mechanism for agent $i$. 

The reward associated with acting as a follower for agent~$i$ is driven by their desire to be socially aligned with others~\cite{klein2023respondents,hillman2023social}. Hence, the corresponding reward function increases the more the agent's opinion deviates from the weighted average opinion of their neighbors. This misalignment can be measured by the squared distance between $i$'s opinion and a weighted average, i.e.,  
\begin{equation}\label{eq:reward0}
    u_{i,0}(z)=\Big(x_i-\sum\nolimits_{j=1}^n W_{ij}x_j\Big)^2.
\end{equation}\end{subequations}
Observe that, while the reward for being a leader only depends on the agent's opinion, the reward for being a follower also depends on the state of the network through the opinions of the neighbors.

We assume that agents revise their leadership according to a replicator equation~\cite{taylor1978replicator}, which is often used in evolutionary game theory to capture the emergent behavior of game-theoretic decision-making processes~\cite{Sandholm2010}. According to this mechanism, the leadership of a generic agent $i$ evolves according to the following ODE:
\begin{equation}\label{eq:leader}
\begin{split}
    \dot y_i=\,&y_i(1-y_i)(u_{i,1}(z)-u_{i,0}(z))\\=\,&\ds y_i(1-y_i) \Big( \rho_i x_i^2 -  \Big( x_i - \sum\nolimits_{j =1}^n W_{ij} x_j \Big)^2 \Big).
    \end{split}
\end{equation}
The leadership dynamics in \eqref{eq:leader} is governed by a complementary tradeoff. On the one hand, agents holding strong opinions tend to increase their leadership, reflecting the emergence of leaders driven by individual conviction. On the other hand, social misalignment reduce leadership incentives. The parameter $\rho_i$ captures the cognitive and social characteristics of agent $i$, regulating the relative importance of the agent's opinion strength and social alignment in shaping their leadership, thus determining whether leadership is primarily driven by conviction or by local consensus. 
Ultimately, the payoff difference $u_{i,1}(z)-u_{i,0}(z)$ captures the balance between individual conviction, promoting leadership through strong opinions, and social legitimacy, penalizing leadership in the presence of disagreement with the neighborhood. Note that the sign of the payoff difference $u_{i,1}(z)-u_{i,0}(z)$ determines the direction of evolution of $y_i$: leadership increases when $u_{i,1}>u_{i,0}$ and decreases when the opposite relation holds true. 

\begin{remark}\label{rem:leader}
    When opinions are fixed, i.e., $x(t)=x^*$, the leadership game defined by the two reward functions in \eqref{eq:reward} is a dominant-strategy game~\cite{Como2021}. In this case, the replicator equation \eqref{eq:leader} converges monotonically to a pure Nash equilibrium, where either $y_i=0$ if \eqref{eq:reward0} is greater than \eqref{eq:reward1}, or $y_i=1$ if the opposite inequality holds true. 
\end{remark}

\subsection{Opinion dynamics}
Together with their degree of leadership, agents revise their opinions over time. Opinions evolve according to a classical opinion dynamics model that accounts for two contrasting mechanisms: the tendency of agents to align their opinions with those of their peers, captured by a linear averaging dynamics~\cite{proskurnikov2017tutorial}, and their attachment to their initial beliefs $x_i(0)\in[-1,1]$. The relative impact of these two mechanisms is traded-off by the agent's leadership $y_i$, whereby agents with a strong leadership are less influenced by others and more attached to their initial belief. To model this behavior, we adopt a continuous-time implementation of the classical Friedkin--Johnsen model~\cite{Friedkin1990}, originally proposed in~\cite{taylor1968towards}, obtaining the following ODE:
\begin{equation}\label{eq:opinion}
    \dot{x}_i = \ds -(1 - y_i) \Big( x_i - \sum\nolimits_{j =1}^n W_{ij} x_j \Big) - y_i(x_i - x_i(0)).
\end{equation}

In \eqref{eq:opinion}, the opinion dynamics is governed by two competing mechanisms. The first term, weighted by $(1-y_i)$, captures social conformity: agents with low degree of leadership tend to reduce their disagreement with the local neighborhood. The second term represents agent anchoring: leaders resist social influence and remain close to their initial opinion.
However, unlike the original formulation of the Friedkin--Johnsen model in \cite{taylor1968towards,Friedkin1990} but similar to~\cite{Alutto2026ifac}, the variable $y_i$ is not a constant, but a time-varying leadership quantity that evolves according to \eqref{eq:leader}.

\begin{remark}\label{rem:opinion}
When the leadership is fixed, i.e., $y(t)=y^*$, \eqref{eq:leader} reduces to a standard continuous-time Friedkin-Johnsen model, which is known to converge to its unique equilibrium~\cite{taylor1968towards,Friedkin1990}, i.e., $x(t)\to(I-W)^{-1}(I-\diag{(y)})x(0)$.
\end{remark}

\subsection{Coupled evolution of opinions and leadership}

\begin{figure}
    \centering
    \includegraphics[width=\linewidth]{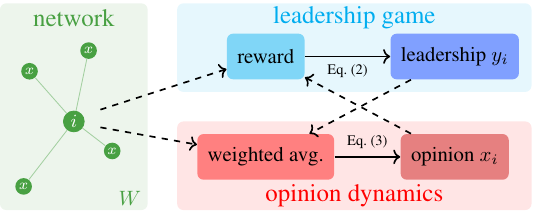}
    \caption{Schematic of the coupled co-evolution of leadership and opinions \eqref{eq:model}.}
    \label{fig:schema}
\end{figure}

In summary, the coupled evolution of opinions and leadership is governed by the following system of $2n$ nonlinear ODEs that couples \eqref{eq:leader} and \eqref{eq:opinion}, for all $i = 1,\dots,n$:
\begin{equation}\label{eq:model}
\begin{cases}
	\dot{x}_i = \ds \! - (1 - y_i)\Big( x_i \!-\! \sum\nolimits_{j =1}^n W_{ij} x_j \!\Big)\! -\! y_i(x_i \!-\! x_i(0)), \\
	\dot{y}_i = \ds y_i(1-y_i) \Big( \rho_i x_i^2 -  \Big( x_i \!- \!\sum\nolimits_{j =1}^n  W_{ij} x_j \Big)^2 \Big),
\end{cases}
\end{equation}
where $\rho_i > 0$ is a scalar parameter and $x_i(0)$ denotes the initial opinion of agent $i$. The model, schematically illustrated in Fig.~\ref{fig:schema}, is parsimonious, in the sense that all cognitive and behavioral features are encapsulated within a unique scalar parameter $\rho_i$ of each agent.

It is worth noticing that, while the two uncoupled subsystems have well-understood behavior and converge to a unique equilibrium, as observed in Remarks~\ref{rem:leader} and~\ref{rem:opinion}, their coupling gives rise to a significantly richer and nontrivial emergent behavior, as we will illustrate in the following. Indeed, since the difference between \eqref{eq:reward1} and \eqref{eq:reward0} ultimately depends on the opinion vector $x(t)$, the evolution of $x(t)$ can ultimately affect the monotonicity of $y(t)$. Conversely, the leadership dynamics feeds back into the opinion dynamics by dynamically modifying the tradeoff between social conformity and individual anchoring, thereby shaping the opinion evolution in a nontrivial way.

\section{General results on the model}\label{sec:general}
The overall leadership-opinion dynamic model in \eqref{eq:model} is well-defined, in the sense that both variables remain in their domain, as shown in the following lemma.

\begin{lemma}\label{lemma:invariance}
    Consider the opinion-leadership model \eqref{eq:model}. If $x(0)\in[-1,1]^n$ and $y(0)\in[0,1]^n$, then the domain $\mathcal D:=[-1,1]^n\times[0,1]^n$ is positively invariant under \eqref{eq:model}.
\end{lemma}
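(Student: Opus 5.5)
The plan is a standard boundary (Nagumo-type) argument. The vector field in \eqref{eq:model} is polynomial in $(x,y)$, with $x(0)$ entering only as a fixed parameter. It is therefore locally Lipschitz, and a unique solution exists on some maximal interval. Since $\mathcal D$ is a compact convex box, it suffices to check that at every boundary point of $\mathcal D$ the vector field does not point outward. Concretely: $\dot y_i\ge 0$ whenever $y_i=0$, $\dot y_i\le0$ whenever $y_i=1$, $\dot x_i\ge0$ whenever $x_i=-1$, and $\dot x_i\le 0$ whenever $x_i=1$, with all other coordinates in their ranges. Invariance then follows from the subtangential condition (Nagumo's theorem). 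Compactness of $\mathcal D$ also yields global existence of solutions.

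\emph{Leadership components.} The $y$-part is immediate. The factor $y_i(1-y_i)$ vanishes at $y_i\in\{0,1\}$, so $\dot y_i=0$ on those faces. Moreover, the hyperplanes $\{y_i=0\}$ and $\{y_i=1\}$ are themselves invariant: with $x(\cdot)$ regarded as a given continuous function, $y_i$ solves the scalar equation $\dot y_i=y_i(1-y_i)g_i(t)$, whose constant solutions $0$ and $1$ cannot be crossed, by uniqueness. Hence $y_i(t)\in[0,1]$ for as long as the solution exists.

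\emph{Opinion components.} Fix a point in $\mathcal D$ with $x_i=1$. Because $W$ is row-stochastic with nonnegative weights and every $x_j\in[-1,1]$, we have $\sum_j W_{ij}x_j\le 1$. This gives $x_i-\sum_jW_{ij}x_j\ge0$ and $x_i-x_i(0)=1-x_i(0)\ge0$. Since $y_i\in[0,1]$, both terms of $\dot x_i$ are nonpositive, so $\dot x_i\le0$. The face $x_i=-1$ is handled symmetrically and gives $\dot x_i\ge0$. One may also write $\dot x_i=-x_i+(1-y_i)\sum_jW_{ij}x_j+y_ix_i(0)$, which shows that $x_i$ relaxes toward a convex combination of points in $[-1,1]$.

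The only delicate point, which I expect to be the main obstacle, is turning these non-strict boundary inequalities into genuine invariance. This is exactly what Nagumo's theorem does for closed convex sets under Lipschitz dynamics, and I would invoke it directly. An elementary alternative is a comparison argument. Suppose $t^*$ is the first time some $x_i$ exceeds $1$. Just before $t^*$, the quantity $\max_i x_i-1$ satisfies a differential inequality of the form $\dot m\le c\,m$, where $c$ is a Lipschitz bound that holds while $y\in[0,1]$. Grönwall's inequality then forces $m\le0$, which gives the contradiction. The same argument applies to $-\min_i x_i-1$.
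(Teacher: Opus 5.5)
Your proposal is correct and follows the same route as the paper: Lipschitz well-posedness, then checking that $\dot y_i=0$ on the faces $y_i\in\{0,1\}$ and that $\dot x_i$ has the right sign on the faces $x_i=\pm1$ (using $\sum_j W_{ij}x_j\in[-1,1]$), then Nagumo's theorem on the compact convex box $\mathcal D$. One small slip in your optional Gr\"onwall alternative: the differential inequality should be applied just \emph{after} the first exit time $t^*$, not just before it, since $m\le 0$ on $[0,t^*]$ by definition; this does not affect your main argument.
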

\begin{proof}
    First observe that the right-hand side of \eqref{eq:model} is continuously differentiable with respect to $(x,y)$. Hence, it is Lipschitz-continuous, guaranteeing existence and uniqueness of solutions for all initial conditions in $\mathcal D$ \cite{hale}.
    Note that $\mathcal D$ is compact and convex and at $y_i=0$ and $y_i=1$, $\dot y_i= 0$. While at $x_i=-1$, $\dot x_i\geq y_i(1+x_i(0))\geq 0$, and at $x_i=1$, $\dot x_i\leq -y_i(1-x_i(0))\leq 0$. Thus, at every boundary point of $\mathcal D$, the vector field points inward. Hence, Nagumo's Theorem yields the claim~\cite{blanchini1999set}. 
\end{proof}

Accordingly, we make the following assumption throughout the paper.
\begin{assumption} \label{ass:ass1}
    The initial conditions of \eqref{eq:model} satisfy $(x(0),y(0)) \in \mathcal D$.
\end{assumption}

The next result establishes that the leadership variable remains strictly positive if initially positive.
\begin{lemma}\label{lemma:positivity}
    Consider the opinion-leadership model \eqref{eq:model} under Assumption~\ref{ass:ass1}. If $y_i(0)>0$, then $y_i(t)>0$ for all $t\ge0$.
\end{lemma}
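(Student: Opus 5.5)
The plan is to treat the leadership equation in \eqref{eq:model} as a scalar linear ODE in $y_i$ with a time-varying coefficient, and to solve it explicitly. By Lemma~\ref{lemma:invariance}, under Assumption~\ref{ass:ass1} the solution $(x(t),y(t))$ exists, is unique, and stays in $\mathcal D$ for all $t\ge0$. Along this solution, define
\[
a_i(t):=(1-y_i(t))\Big(\rho_i x_i(t)^2-\Big(x_i(t)-\sum\nolimits_{j=1}^n W_{ij}x_j(t)\Big)^2\Big),
\]
so that $\dot y_i(t)=a_i(t)\,y_i(t)$.

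Next, bound $a_i$. Since $x(t)\in[-1,1]^n$ and $W$ is row-stochastic, the weighted average $\sum_j W_{ij}x_j$ lies in $[-1,1]$. Hence the disagreement term lies in $[0,4]$, while $\rho_i x_i^2\in[0,\rho_i]$. Together with $1-y_i\in[0,1]$, this gives $|a_i(t)|\le \max\{\rho_i,4\}$ for all $t\ge0$. Moreover, $a_i$ is continuous in $t$, because it is a composition of continuous functions with the solution.

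Integrating the linear ODE then yields
\[
y_i(t)=y_i(0)\exp\Big(\int_0^t a_i(s)\,ds\Big)\ \ge\ y_i(0)\,e^{-\max\{\rho_i,4\}\,t}>0 .
\]
This gives the claim for every finite $t\ge0$.

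The only point needing care is justifying the explicit formula along the coupled trajectory, since $a_i$ depends on $y_i$ itself. I would handle this by noting that the function $t\mapsto y_i(0)\exp(\int_0^t a_i)$ satisfies the same scalar linear ODE $\dot w=a_i(t)w$ with the same initial value as $y_i$, where $a_i$ is now regarded as a fixed continuous function of time. Uniqueness for scalar linear ODEs with continuous coefficients then forces the two to coincide. An alternative argument runs as follows: $y_i\equiv0$ is invariant, so a trajectory starting at $y_i(0)>0$ could not reach $0$ in finite time without contradicting uniqueness of solutions from Lemma~\ref{lemma:invariance}.
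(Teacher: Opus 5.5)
Your proof is correct and is essentially the paper's argument. The paper uses Lemma~\ref{lemma:invariance} to get $\dot y_i\ge -4y_i$ and applies Gronwall's lemma to obtain $y_i(t)\ge y_i(0)e^{-4t}$; you reach the same kind of bound through the integrating-factor representation $y_i(t)=y_i(0)\exp\big(\int_0^t a_i(s)\,ds\big)$. Your two-sided bound $|a_i|\le\max\{\rho_i,4\}$ is looser than necessary, since $\rho_i x_i^2\ge 0$ already gives $a_i\ge -4$, but this does not affect the conclusion.
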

\begin{proof}
Lemma~\ref{lemma:invariance} implies that $x(t)\in[-1,1]$ and $y_i(t)\in[0,1]$. Hence,  $( x_i - \sum_{j =1}^n W_{ij} x_j )^2\leq4$ and so, from \eqref{eq:leader}, $\dot y_i\geq -4y_i$. Finally, using Gronwall's lemma, we bound $y_i(t)\geq y(0)e^{-4t}>0$, which yields the claim.
\end{proof}

We now study the equilibria of the coupled system \eqref{eq:model}. To simplify the notation, for each agent $i$ we define the \emph{local weighted average opinion}
\begin{equation}\label{eq:si}
    s_i(x):=\sum\nolimits_{j=1}^n W_{ij}x_j,
\end{equation}
which represents the opinion perceived by agent~$i$ in its neighborhood.
The next result shows that, at equilibrium, agents can be partitioned into leaders ($y_i^*=1$), complete followers ($y_i^*=0$), and partial followers ($y_i^*\in(0,1)$).
\begin{lemma}\label{lemma:equilibria}
    Consider the opinion-leadership model~\eqref{eq:model} under Assumption~\ref{ass:ass1} and let $(x^*,y^*)\in \mc D$ be an equilibrium point. Define the partition $\mathcal V=\{1,\dots,n\}=\mathcal V_0\cup \mathcal V_1\cup \mathcal V_m$, where 
    $\mathcal V_0:=\{i:\ y_i^*=0\}$, 
    $\mathcal V_1:=\{i:\ y_i^*=1\}$, and 
    $\mathcal V_m:=\{i:\ y_i^*\in(0,1)\}$. 
    Then the following results hold true:
    \begin{itemize}
        \item[(i)] If $i\in\mathcal V_0$, then $x_i^*=s_i(x^*)$.
        \item[(ii)] If $i\in\mathcal V_1$, then $x_i^*=x_i(0)$.
        \item[(iii)] If $i\in\mathcal V_m$ and $\rho_i\neq 1$, then $x_i^* = u_i(x^*)$ or $x_i^*=\ell_i(x^*)$ with 
    \begin{subequations}
        \begin{align}
            u_i(x) &:=\frac{-s_i(x)+\sqrt{\rho_i}\,|s_i(x)|}{\rho_i-1}, \label{eq:ui} \\
            \ell_i(x) &:=\frac{-s_i(x)-\sqrt{\rho_i}\,|s_i(x)|}{\rho_i-1},\label{eq:elli}
        \end{align}
    \end{subequations}
   and 
    \begin{equation}\label{eq:y_star}
    y_i^*=\frac{s_i(x^*)-x_i^*}{s_i(x^*)-x_i(0)},
    \end{equation}
    whenever $s_i(x^*)\neq x_i(0)$, which satisfies $y_i^*\in(0,1)$.
        \item[(iv)] If $i\in\mathcal V_m$ and $\rho_i =1$, $s_i(x^*)=0$ or $2x_i^*=s_i(x^*)$, and $y_i^*$ is given by \eqref{eq:y_star}, when $s_i(x^*)\neq x_i(0)$.
    \end{itemize}
\end{lemma}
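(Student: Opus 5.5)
The plan is to write the two equilibrium conditions $\dot x_i=0$ and $\dot y_i=0$ from \eqref{eq:model} in terms of the local average $s_i=s_i(x^*)$ from \eqref{eq:si}, and then split according to the partition $\mathcal V_0,\mathcal V_1,\mathcal V_m$. The opinion equation at equilibrium reads
\[
(1-y_i^*)(x_i^*-s_i)+y_i^*(x_i^*-x_i(0))=0,
\]
and the leadership equation reads $y_i^*(1-y_i^*)\big(\rho_i (x_i^*)^2-(x_i^*-s_i)^2\big)=0$.

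For items (i) and (ii) only the opinion equation is needed, since $\dot y_i=0$ holds automatically when $y_i^*\in\{0,1\}$. If $y_i^*=0$, the opinion equation gives $x_i^*=s_i$. If $y_i^*=1$, it gives $x_i^*=x_i(0)$.

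For $i\in\mathcal V_m$, the factor $y_i^*(1-y_i^*)$ is nonzero, so $\rho_i (x_i^*)^2=(x_i^*-s_i)^2$. Taking square roots gives $\sqrt{\rho_i}\,|x_i^*|=|x_i^*-s_i|$, equivalently
\[
(\rho_i-1)(x_i^*)^2+2s_ix_i^*-s_i^2=0.
\]
\begin{itemize}
\item When $\rho_i\neq1$, the quadratic formula gives
\[
x_i^*=\frac{-s_i\pm\sqrt{s_i^2+(\rho_i-1)s_i^2}}{\rho_i-1}=\frac{-s_i\pm\sqrt{\rho_i}\,|s_i|}{\rho_i-1},
\]
since the discriminant simplifies to $4\rho_i s_i^2$. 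These two roots are exactly $u_i$ and $\ell_i$ in \eqref{eq:ui}--\eqref{eq:elli}.
\item When $\rho_i=1$, the equation becomes linear, $2s_ix_i^*-s_i^2=0$, i.e. $s_i(2x_i^*-s_i)=0$. This gives the two alternatives $s_i=0$ or $2x_i^*=s_i$ in (iv).
\end{itemize}

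The expression for $y_i^*$ comes from solving the opinion equation, which is linear in $y_i^*$. Rewriting it as $x_i^*-s_i+y_i^*(s_i-x_i(0))=0$ yields \eqref{eq:y_star} whenever $s_i\neq x_i(0)$. The claim that this value lies in $(0,1)$ is not something to derive. It holds because $i\in\mathcal V_m$ means $y_i^*\in(0,1)$ by definition, and \eqref{eq:y_star} is the unique solution of the linear equation. Equivalently, $x_i^*$ is a strict convex combination of $s_i$ and $x_i(0)$.

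I expect no step to be a real obstacle, since everything is elementary algebra. The only points needing care are these:
\begin{itemize}
\item The $|s_i|$ in the root formula is harmless: writing $\sqrt{\rho_i s_i^2}=\sqrt{\rho_i}\,|s_i|$ with the $\pm$ sign covers both sign cases of $s_i$.
\item The degenerate case $s_i=x_i(0)$ must be excluded, as the statement does. In that case the opinion equation forces $x_i^*=s_i$ and leaves $y_i^*$ undetermined by it.
\end{itemize}
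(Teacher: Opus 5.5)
Your proposal is correct and follows essentially the same route as the paper. Both arguments write down $\dot x_i=0$ and $\dot y_i=0$ at equilibrium and read off (i)--(ii) from the opinion equation. For $i\in\mathcal V_m$, both solve $\rho_i (x_i^*)^2=(x_i^*-s_i)^2$, split into $\rho_i\neq 1$ and $\rho_i=1$, and solve the linear opinion equation for $y_i^*$, with $y_i^*\in(0,1)$ inherited from the definition of $\mathcal V_m$. Your explicit quadratic-formula computation fills in the step that the paper asserts without detail.
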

\begin{proof}
    At equilibrium, $\dot y_i=0$ implies that for each $i$ either $y_i^*\in\{0,1\}$ or
    \begin{equation}\label{eq:diff-null}
    \rho_i {x_i^*}^2-(x_i^*-s_i(x^*))^2=0,
    \end{equation}
    and $\dot x_i =0$ implies 
    \be \label{eq:dot-xi-eq}(x_i^*-s_i(x^*))(1-y_i^*)=-y_i^*(x_i^*-x_i(0)).\ee
    If $y_i^*=0$, then \eqref{eq:dot-xi-eq} implies $x_i^*=s_i(x^*)$, proving (i).
    If $y_i^*=1$, then \eqref{eq:dot-xi-eq} yields $x_i^*=x_i(0)$, proving (ii).
    Assume now $y_i^*\in(0,1)$. If $\rho_i\neq 1$, \eqref{eq:diff-null} 
    is satisfied if either $x_i^*=u_i(x^*)$ or $x_i^*=\ell_i(x^*)$ defined in \eqref{eq:ui} and \eqref{eq:elli}, respectively.
    Finally, imposing $\dot x_i=0$ gives \eqref{eq:y_star} whenever $s_i(x^*)\neq x_i(0)$. The constraint $y_i^*\in(0,1)$ follows from $i\in\mathcal V_m$. Thus proving (iii).    
    If $\rho_i=1$, \eqref{eq:diff-null} becomes ${x_i^*}^2=(x_i^*-s_i(x^*))^2$, which is equivalent to $s_i(x^*)=0$ or $2x_i^*=s_i(x^*)$, proving (iv).
\end{proof}

We observe that the system \eqref{eq:model} has, in general, multiple equilibria: agent~$i$ can be a leader ($y^*_i=1$), a complete follower ($y^*_i=0$), or a partial follower ($y^*_i\in(0,1)$). 
If agent $i$ is a leader, Lemma~\ref{lemma:equilibria}(ii) establishes that their opinion coincides with their initial opinion, i.e., $x_i^*=x_i(0)$. If agent $i$ is a complete follower, their opinion is the weighted average of their neighbors (Lemma~\ref{lemma:equilibria}(i)), as predicted by classical opinion dynamics models~\cite{proskurnikov2017tutorial}. Interestingly, if agent $i$ is a partial follower, then depending on their parameter $\rho_i$ the agent's opinion can have multiple equilibria, (Lemma~\ref{lemma:equilibria}(iii)-(iv)), showcasing the richness of the model. 
Figure~\ref{fig:ex2} shows a numerical simulation of the coupled opinion-leadership model \eqref{eq:model} on a network with $n=10$ agents to better illustrate its emergent behavior. The trajectories converge to an equilibrium configuration that, according to Lemma \ref{lemma:equilibria}, corresponds to a partition of the agent set into $\mc V_m$ and $\mc V_1$.
In particular, five agents converge to a final leadership level equal to $1$, becoming full leaders and anchoring their opinions close to their initial values. The remaining five agents behave as (partial) followers, adjusting their opinions under the influence of the weighted average of their neighbors. Interestingly, no agent behaves like a complete follower. Indeed, we claim that equilibria where all agents are complete followers are unstable, but a rigorous proof is left for future work, due to space limitations.

The behavior observed in Fig.~\ref{fig:ex2} highlights the nonlinear feedback between opinions and leadership and motivates the research questions: Under which conditions does leadership emerge in the social network? Is it possible to predict a priori which agent will necessarily emerge as a leader and who, instead, will be a follower? Are these features inherent properties of each agent, or do they depend on others?
In the next section, we address these questions by characterizing sufficient conditions under which leadership arises and is sustained at equilibrium.

\begin{figure}
    \centering
    \subfloat[Opinions]{\includegraphics[width=0.5\linewidth]{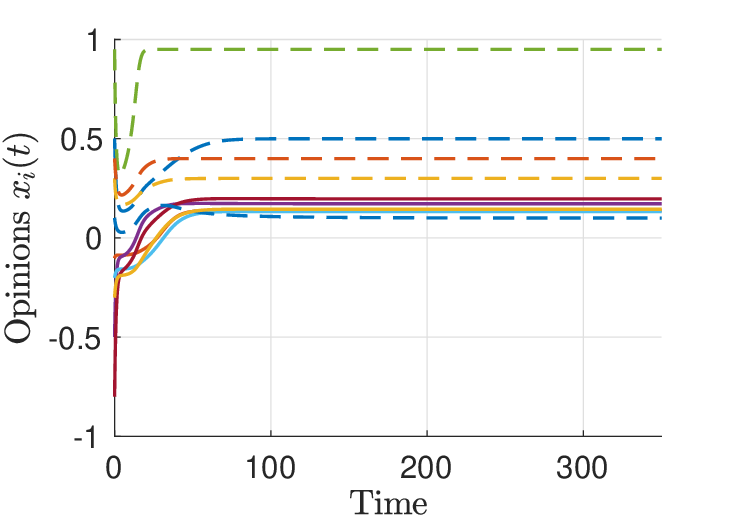}}
    \subfloat[Leadership]{\includegraphics[width=0.5\linewidth]{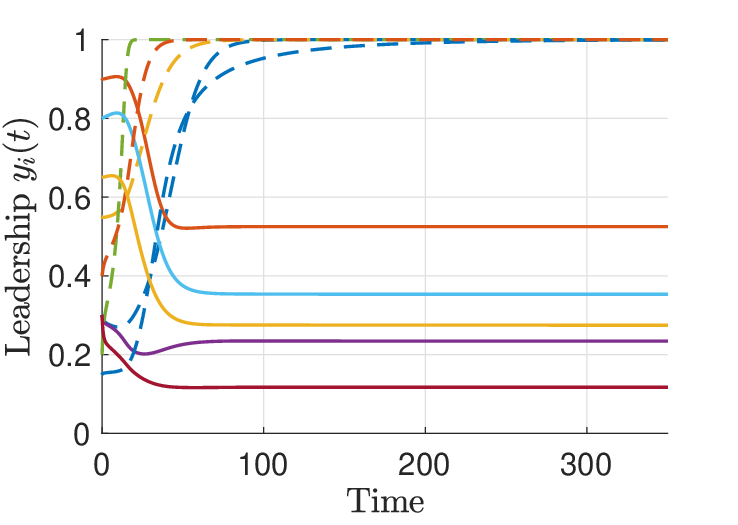}}
    \caption{Numerical simulation of the model \eqref{eq:model} with $n=10$ agents showing the trajectories of (a) opinion and (b) leadership. Dashed trajectories indicate the emerging (five) leaders.}
    \label{fig:ex2}
\end{figure}


\section{Leadership emergence}\label{sec:results}
In this section, we analyze the conditions under which leaders emerge.
In particular, in Section~\ref{sec:leader}, we establish sufficient conditions for the emergence of leaders, and in Section~\ref{sec:follower}, we identify who necessarily acts as a follower.

\subsection{Sufficient conditions for the emergence of leaders}\label{sec:leader}

Recalling the definitions of $u_i(x)$ and $\ell_i(x)$ in \eqref{eq:ui} and \eqref{eq:elli}, for agent $i$ we define the auxiliary functions
\begin{equation}\label{eq:g}G^+_i(x):=x_i-u_i(x),\quad G^-_i(x):=x_i-\ell_i(x),\end{equation}
which measure the signed distance of the opinion $x_i$ from the two critical thresholds $u_i(x)$ and $\ell_i(x)$. In the following, we show that the sign of these auxiliary functions at a certain time determines the derivative of the leadership variable.
In this first work, we restrict our analysis to agents $i$ with $\rho_i>1$, for which tractable sufficient conditions for leadership emergence can be established.
\begin{lemma}\label{lemma:dot-yi}
    Consider the opinion-leadership model \eqref{eq:model} under Assumption~\ref{ass:ass1}.
    For agent $i$ with $\rho_i>1$, if at some time $t\ge 0$, either $G_i^+(x(t))\geq 0$ or $G_i^-(x(t))\leq 0$, then $\dot y_i(t) \geq 0$.
\end{lemma}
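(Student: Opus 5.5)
The proof is a sign analysis of the payoff difference in \eqref{eq:leader}, treated as a quadratic in the agent's own opinion. Fix $t\ge 0$ and write $s:=s_i(x(t))$ for the local weighted average opinion \eqref{eq:si} evaluated along the trajectory. The one subtlety is that $s_i(x)$ may itself depend on $x_i$ when $W_{ii}\neq 0$. To handle this, I freeze $s$ as a number and introduce a scalar variable $\xi$:
\[
f(\xi):=\rho_i\xi^2-(\xi-s)^2 .
\]
Then $\dot y_i(t)=y_i(t)(1-y_i(t))\,f(x_i(t))$. By Lemma~\ref{lemma:invariance}, $y_i(t)\in[0,1]$, so the prefactor $y_i(t)(1-y_i(t))$ is nonnegative. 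It therefore suffices to show that $f(x_i(t))\ge 0$ under either hypothesis.

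The key step is the factorization
\[
f(\xi)=(\rho_i-1)\,(\xi-u)(\xi-\ell), \qquad u:=u_i(x(t)),\ \ell:=\ell_i(x(t)),
\]
with $u_i,\ell_i$ as in \eqref{eq:ui}--\eqref{eq:elli}. I will verify it by Vieta's formulas. Expanding gives $f(\xi)=(\rho_i-1)\xi^2+2s\xi-s^2$. From the definitions, $u+\ell=-2s/(\rho_i-1)$. Using $|s|^2=s^2$, the product is $u\ell=(s^2-\rho_i s^2)/(\rho_i-1)^2=-s^2/(\rho_i-1)$. Multiplying out $(\rho_i-1)(\xi^2-(u+\ell)\xi+u\ell)$ then recovers the coefficients of $f$. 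Equivalently, $u,\ell$ are the roots of $f$ given by the quadratic formula, with discriminant $4s^2+4(\rho_i-1)s^2=4\rho_i s^2$. The identity holds for every $\xi$ and every $s$, which is exactly why freezing $s$ is harmless.

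To conclude, note that $\rho_i>1$ gives $u-\ell=2\sqrt{\rho_i}|s|/(\rho_i-1)\ge 0$, so $\ell\le u$, and that $f$ is an upward-opening parabola. Hence $f(\xi)\ge 0$ whenever $\xi\ge u$ or $\xi\le\ell$. The condition $G_i^+(x(t))\ge 0$ means $x_i(t)\ge u$, and the condition $G_i^-(x(t))\le 0$ means $x_i(t)\le \ell$. In either case both factors $\xi-u$ and $\xi-\ell$ have the same sign (or one vanishes), so $f(x_i(t))\ge 0$ and therefore $\dot y_i(t)\ge 0$.

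I expect no real obstacle. The only care needed is the $W_{ii}$ dependence, which the frozen-$s$ argument handles, and keeping track of the ordering $\ell\le u$, which is precisely where $\rho_i>1$ is used.
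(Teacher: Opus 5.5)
Your proposal is correct and follows essentially the same route as the paper. Both arguments treat the payoff difference as a quadratic in $x_i$ that opens upward for $\rho_i>1$, with ordered roots $\ell_i\le u_i$, so it is nonnegative exactly when $x_i\le \ell_i$ or $x_i\ge u_i$, and both then use $y_i(1-y_i)\ge 0$ from Lemma~\ref{lemma:invariance}. Your Vieta check and the frozen-$s$ treatment of a possible $W_{ii}\neq 0$ spell out rigorously what the paper asserts in one line.
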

\begin{proof}
    By definition, $G_i^+(x(t))\geq 0$ or $G_i^-(x(t))\leq 0$ implies either $x_i(t)\geq u_i(x(t))$ or $x_i(t)\leq \ell_i(x(t))$, respectively. Observe that for $\rho_i>1$ the inequality
    $$ \rho_i x_i^2-(x_i-s_i(x))^2 \geq 0 $$
    is equivalent to $x_i\leq \ell_i(x)$ or $x_i \geq u_i(x)$, since $\ell_i(x)$ and $u_i(x)$ are exactly the two real roots (ordered) of \eqref{eq:diff-null}.
    Hence, from the leadership dynamics \eqref{eq:leader}, it follows that $\dot y_i(t) \ge 0$ for all $(x(t),y(t))$ in $\mc D$.
\end{proof}

Based on Lemma~\ref{lemma:dot-yi}, we define two regions of the state space where the leadership for an agent $i$, with $\rho_i>1$, is non-decreasing:
\begin{equation} \label{eq:omega}
\begin{array}{l}\Omega^+_i:=\{(x,y)\in\mc D: G^+_i(x)>0\},\\
 \Omega^-_i:=\{(x,y)\in\mc D: G^-_i(x)<0\}.\end{array}\end{equation}
Next, we provide sufficient conditions for the positively invariance of these sets, which implies that the leadership is monotonically increasing. 

\begin{proposition}\label{prop:prop1}
    Consider the opinion-leadership model \eqref{eq:model} under Assumption~\ref{ass:ass1}. Consider an agent $i$ with $\rho_i>1$ and $x_i(0) \in (-1,1)$. Then the following hold:
    \begin{itemize}
        \item[(i)] if $(x(0),y(0))\in\Omega^+_i$ and 
        \be\label{eq:Gi-hyp} y_i(0)\! >\! \frac{1}{\rho_i\! -\!1}\max\Big\{\frac{\sqrt{\rho_i}-1}{x_i(0)}, \frac{\sqrt{\rho_i}(1+\sqrt{\rho_i})}{1+x_i(0)}\Big\}, \ee
        then $(x(t),y(t))\in\Omega^+_i$ for all $t \geq 0$.
        \item[(ii)] If $(x(0),y(0))\in\Omega^-_i$ and 
        \be\label{eq:Gi-hyp2} y_i(0)\! >\! \frac{1}{\rho_i\! -\!1}\!\max\Big\{\!\! -\frac{\sqrt{\rho_i}-1}{x_i(0)}, \frac{\sqrt{\rho_i}(1+\sqrt{\rho_i})}{1-x_i(0)}\Big\}, \ee
        then $(x(t),y(t))\in\Omega^-_i$ for all $t \geq 0$.
    \end{itemize}
\end{proposition}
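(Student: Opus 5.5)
The plan is a first-exit-time (barrier) argument for the set $\Omega_i^+$; part (ii) then follows by the symmetry $x\mapsto -x$, which maps $\ell_i$ to $-u_i$ and $\Omega_i^-$ to $\Omega_i^+$.

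\emph{Step 1: explicit form of $u_i$.} For $\rho_i>1$ I would first write $u_i$ explicitly. If $s_i\ge 0$, then $u_i=s_i/(\sqrt{\rho_i}+1)$. If $s_i<0$, then $u_i=-s_i/(\sqrt{\rho_i}-1)$. Hence $u_i\ge 0$ always, so $(x(0),y(0))\in\Omega_i^+$ forces $x_i(0)>0$. This gives meaning to the term $(\sqrt{\rho_i}-1)/x_i(0)$ in \eqref{eq:Gi-hyp}. Moreover, $u_i$ is a piecewise-linear function of $s_i$, globally Lipschitz with constant $1/(\sqrt{\rho_i}-1)$. Since $G_i^+$ is therefore only Lipschitz (it has a kink at $s_i=0$), I would work with upper and lower Dini derivatives rather than classical derivatives.

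\emph{Step 2: setting up the barrier argument.} Suppose, by contradiction, that the trajectory leaves $\Omega_i^+$. Let $\tau>0$ be the first time with $G_i^+(x(\tau))=0$. On $[0,\tau]$ we have $G_i^+\ge 0$, so Lemma~\ref{lemma:dot-yi} gives $\dot y_i\ge 0$, and therefore $y_i(\tau)\ge y_i(0)$. It then suffices to show that the lower Dini derivative of $G_i^+(x(t))$ at $t=\tau$ is strictly positive. This contradicts $G_i^+$ decreasing from positive values to $0$ at $\tau$.

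\emph{Step 3: the sign computation.} At the boundary, $x_i=u_i$, and \eqref{eq:opinion} rewrites as
\[
\dot x_i=(s_i-u_i)+y_i\,(x_i(0)-s_i).
\]
The time derivative of $u_i$ is bounded by $|\dot s_i|/(\sqrt{\rho_i}-1)$. Also $|\dot s_i|\le\max_j|\dot x_j|$, and $|\dot x_j|$ can be bounded using Lemma~\ref{lemma:invariance}: all opinions lie in $[-1,1]$, and $|\dot x_j|\le(1-y_j)|x_j-s_j|+y_j|x_j-x_j(0)|$.

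I would then split into the cases $s_i\ge0$ and $s_i<0$, substitute the explicit form of $u_i$, and use $y_i(\tau)\ge y_i(0)$ together with $s_i\in[-1,1]$. The goal is to show $\dot x_i-\dot u_i>0$.
\begin{itemize}
\item The condition $y_i(0)(\rho_i-1)x_i(0)>\sqrt{\rho_i}-1$, i.e. $y_i(0)\,x_i(0)>1/(\sqrt{\rho_i}+1)$, should dominate the regime where $s_i$ is small or positive, where the anchoring term $y_ix_i(0)$ must beat the drift.
\item The condition $y_i(0)(\rho_i-1)(1+x_i(0))>\sqrt{\rho_i}(1+\sqrt{\rho_i})$ should handle the worst case $s_i\to -1$, where the factor $1/(\sqrt{\rho_i}-1)$ from $u_i$ appears.
\end{itemize}

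\emph{Main obstacle.} I expect the main difficulty to be controlling $\dot u_i$, which depends on the other agents' opinion velocities, and which the hypothesis on $y_i(0)$ alone does not obviously dominate. My first attempt is to avoid crude global bounds by exploiting the structure at the boundary. There, $|x_i-s_i|$ and $|u_i|$ are tied to $s_i$, and I would choose the worst case over $s_i\in[-1,1]$ so that the two terms in the $\max$ of \eqref{eq:Gi-hyp} emerge exactly. If this is insufficient, I would instead use a Nagumo-type tangency argument on the full state, considering the vector field's inner product with the (generalized) gradient of $G_i^+$. The gradient has components $1$ in $x_i$ and $-W_{ij}\,\partial u_i/\partial s_i$ in $x_j$.
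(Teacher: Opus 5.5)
Your outer structure matches the paper's proof: a first exit time $\tau$, $y_i(\tau)\ge y_i(0)$ from Lemma~\ref{lemma:dot-yi}, $x_i(0)>0$ because $u_i\ge 0$, and a case split on the sign of $s_i(x(\tau))$. Reducing (ii) to (i) through the symmetry $(x,x(0),y)\mapsto(-x,-x(0),y)$ is also a valid shortcut, since it exchanges $\Omega_i^-$ with $\Omega_i^+$ and \eqref{eq:Gi-hyp2} with \eqref{eq:Gi-hyp}.

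The gap is the decisive step. You never establish the sign of $\frac{d}{dt}G_i^+$ at $\tau$, and the route you sketch cannot produce the thresholds in \eqref{eq:Gi-hyp}. Bounding $|\dot u_i|\le|\dot s_i|/(\sqrt{\rho_i}-1)$ with $|\dot s_i|\le\max_j|\dot x_j|\le 2$ throws away all sign information. At a boundary point with $s_i=0$ it only yields $\dot x_i-\dot u_i\ge y_ix_i(0)-2/(\sqrt{\rho_i}-1)$, whereas the hypothesis only guarantees $y_i(0)x_i(0)>1/(\sqrt{\rho_i}+1)$. For example, $\rho_i=100$, $x_i(0)=0.2$, $y_i(0)=0.95$ satisfies \eqref{eq:Gi-hyp} (the right-hand side is about $0.926$), yet $0.19<2/9$. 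The sharper estimates (slope $1/(\sqrt{\rho_i}+1)$ for $s_i\ge 0$, and $|\dot s_i|\le 1+|s_i|$) still fail for these data at $s_i=1$. Taking a worst case over $s_i$ after absolute values have been taken cannot recover the sharp constants.

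The missing idea is an exact cancellation. Put $c_j:=(1-y_j)s_j(x)+y_jx_j(0)$. This is a convex combination of numbers in $[-1,1]$ by Lemma~\ref{lemma:invariance}, so $\dot x_j=-x_j+c_j$ and $\dot s_i=-s_i+\sigma_i$ with $\sigma_i:=\sum_j W_{ij}c_j\in[-1,1]$. On each half-line $u_i=\kappa s_i$, where $\kappa=a_i=1/(\sqrt{\rho_i}+1)$ if $s_i\ge 0$ and $\kappa=-b_i=-1/(\sqrt{\rho_i}-1)$ if $s_i<0$; thus $\kappa s_i=u_i$. Therefore $\kappa\dot s_i=-u_i+\kappa\sigma_i$: the threshold relaxes at the same unit rate as $x_i$. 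On the boundary $x_i=u_i$ this gives
\[
\frac{d}{dt}\bigl(x_i-\kappa s_i\bigr)=-x_i+u_i+c_i-\kappa\sigma_i=c_i-\kappa\sigma_i .
\]
When $s_i\ge 0$, this is at least $y_i(0)x_i(0)-a_i$. When $s_i<0$, it is at least $-1+(1+x_i(0))y_i(0)-b_i$. These are exactly the two terms of the max in \eqref{eq:Gi-hyp}, since $1+b_i=\sqrt{\rho_i}(1+\sqrt{\rho_i})/(\rho_i-1)$. Your absolute-value bound turns the favourable term $-\kappa\cdot(-s_i)=+u_i$ into $-u_i$, losing $2u_i$; that is precisely the slack you are missing.

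No Dini derivatives are needed either. Since $u_i=\max\{a_is_i,-b_is_i\}$, we have $G_i^+=\min\{x_i-a_is_i,\;x_i+b_is_i\}$. The branch active at $\tau$ is $C^1$, lies above $G_i^+>0$ on $[0,\tau)$, and vanishes at $\tau$, so its derivative there is nonpositive, contradicting the computation above. Your Nagumo-type fallback is the same computation, but it succeeds only if carried out exactly in this way, not with absolute-value estimates.
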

\begin{proof}
    To simplify the analysis, define for $\rho_i >1$ the two scalar quantities,
    \be \label{eq:ai-bi}a_i:=\frac{\sqrt{\rho_i}-1}{\rho_i-1},  \qquad  b_i:=\frac{\sqrt{\rho_i}+1}{\rho_i-1}.\ee
    (i) Initial condition in $\Omega^+_i$ implies $G^+_i(x(0))>0$. We want to prove that the corresponding solution will remain in $\Omega^+_i$ for all $t \geq 0$. 
    Assume by contradiction that the statement is false; then, there exists $ t^*:=\inf\{t>0:\ G^+_i(x(t))\le 0\}$, yielding $G^+_i(x(t))>0$ for all $t<t^*$ and $G^+_i(x(t^*))=0$. By Lemma~\ref{lemma:dot-yi}, $y_i(t)$ is non-decreasing on $[0,t^*)$. We distinguish two cases depending on the sign of $s_i(x(t^*))$.

    \emph{Case 1: $s_i(x(t^*))\ge 0$}. Then, $u_i(x(t^*)) = a_i s_i(x(t^*))$ and $G^+_i(x(t^*))=x_i(t^*)-a_i s_i(x(t^*))$.
    Let us compute the first time derivative of $G^+_i(x(t))$, and evaluate it 
   at time $t=t^*$, using $x_i(t^*)=a_i s_i(x(t^*))$. We obtain:
    \begin{align*}
        \dot G^+_i&(x(t^*)) = (1-y_i) s_i(x) + x_i(0) y_i \\
        &\quad -  a\sum\nolimits_{j=1}^n W_{ij} (1-y_j)s_j(x) -a_i\sum\nolimits_{j=1}^n W_{ij}x_j(0)y_j \\
        &\geq x_i(0) y_i - a\sum\nolimits_{j=1}^n W_{ij} (1-y_j)- a_i\sum\nolimits_{j=1}^n W_{ij}y_j \\
        &\geq x_i(0) y_i(0) - a_i> 0,
    \end{align*}
    where the first inequality follows from the facts that $s_i(x) \geq 0$ for all $i$, $s_j(x) \leq 1$, and $x_j(0)\leq 1$ for all $j$. The second inequality follows from the facts that $y_i(t)$ is non-decreasing for all $t \in [0,\tau)$ from Lemma \ref{lemma:dot-yi} and that $x_i(0)> u_i(0)>0$. The last inequality follows from \eqref{eq:Gi-hyp}, using \eqref{eq:ai-bi}. Hence $\dot G^+_i(x(t^*))>0$.
    However, this is in contrast with the definition of $t^*$, leading to a contradiction.
    Hence Case 1 is impossible.
    
    \emph{Case 2: $s_i(x(t^*))< 0$}. Then $u_i(x(t^*)) = -b_i s_i(x(t^*))$ and $G^+_i(x(t^*))=x_i(t^*)+b_i s_i(x(t^*))$.
    Computing the first derivative of $G^+_i(x(t))$, evaluated at $t=t^*$ and using $x_i(t^*)=- b_i s_i(x(t^*))$, yields
       \begin{align*}
        \dot G^+_i&(x(t^*)) = (1-y_i) s_i(x) + x_i(0) y_i \\
        &\quad +b_i \sum\nolimits_{j=1}^n W_{ij} (1-y_j)s_j(x) + b_i \sum\nolimits_{j=1}^n W_{ij}x_j(0)y_j \\
        &\geq  - 1+(1+ x_i(0)) y_i(0) -b_i > 0, 
    \end{align*}
    which follows from the facts that $s_i(x) \geq -1$ for all $i$, $s_j(x) \leq 1$ and $x_j(0)\leq 1$ for all $j$, $y_i(t)$ is non-decreasing for all $t \in [0,\tau)$, and \eqref{eq:Gi-hyp}, using \eqref{eq:ai-bi}. Hence $\dot G^+_i(x(t^*))>0$, which yields a contradiction also for Case 2, implying that $\Omega_i^+$ is positively invariant.
    
    (ii) The proof for $\Omega^-_i$ is analogous. Define $t^*:=\inf\{t>0:\ G^-_i(x(t))\geq 0\}.$
    Then $G^-_i(x(t))<0$ for all $t<t^*$ and $G^-_i(x(t^*))=0$. We proceed as above and distinguish the sign of $s_i(x(t^*))$.
    
    \emph{Case 1: $s_i(x(t^*))\ge 0$}. Then $\ell_i(x(t^*)) = -b_i s_i(x(t^*))$ and $G^-_i(x(t^*))=x_i(t^*)+b_i s_i(x(t^*))$.
    Computing the first derivative of $G^-_i(x(t))$, evaluating in $t^*$ and using $x_i(t^*)=-b_i s_i(x(t^*))$ yields 
    \begin{align*}
        \dot G&^-_i(x(t^*)) = (1-y_i) s_i(x) + x_i(0) y_i \\
        &\quad +  b_i \sum\nolimits_{j=1}^n W_{ij} (1\! -\! y_j)s_j(x) + b_i \! \sum\nolimits_{j=1}^n W_{ij}x_j(0)y_j \\
        &\leq 1 \!-\! y_i \!+\! x_i(0) y_i \!+ \! b_i\! \sum\nolimits_{j=1}^n\!\! W_{ij} (1-y_j) + \! b_i \! \sum\nolimits_{j=1}^n\!\! W_{ij} y_j  \\
        &\leq 1 - (1 - x_i(0)) y_i(0) +b_i< 0, 
    \end{align*}
    which follows, similar as for i), from \eqref{eq:Gi-hyp2}, using \eqref{eq:ai-bi}. Hence $\dot G^-_i(x(t^*))<0$, which contradicts the necessary condition $\dot G^-_i(x(t^*))\geq 0$. Thus Case 1 is impossible.
    
    \emph{Case 2: $s_i(x(t^*))<0$}. Then $\ell_i(x(t^*)) = a_i s_i(x(t^*))$, hence $G^-_i(x(t^*))=x_i-a_i s_i(x(t^*))$. Computing the first derivative of $G^-_i(x(t))$, evaluating in $t^*$ and using $x_i(t^*)=a_i s_i(x(t^*))$ yields
    \begin{align*}
        \dot G^-_i&(x(t^*)) = (1-y_i) s_i(x) + x_i(0) y_i \\
        &\quad -a_i \sum\nolimits_{j=1}^n\!\! W_{ij} (1-y_j)s_j(x) - a_i \sum\nolimits_{j=1}^n \!\! W_{ij}x_j(0)y_j \\
        &\leq  x_i(0) y_i+a_i \leq x_i(0) y_i(0)+a_i < 0, 
    \end{align*}
    which follows again from the fact that $x_i(0)< l_i(0)<0$ and \eqref{eq:Gi-hyp2}, using \eqref{eq:ai-bi}. This leads to a contradiction, implying that $\Omega_i^-$ is positively invariant.
\end{proof}
\begin{remark}
    Note that conditions \eqref{eq:Gi-hyp}--\eqref{eq:Gi-hyp2} are well-defined. Indeed, for any agent $i$ whose initial condition belongs to either $\Omega_i^+$ or $\Omega_i^-$, it necessarily holds that $x_i(0)\neq 0$. In particular, if $(x(0),y(0))$ is in $\Omega_i^+$ then $x_i(0)>u_i(0)>0$, whereas if it is in $\Omega_i^-$ then $x_i(0)<l_i(0)<0$. Therefore, these conditions ensure that the opinion $x_i$ evolves within a strip that remains separated from the origin, i.e., away from moderate opinions. As will be shown later, this property plays a key role in the emergence of leadership.
\end{remark}

\begin{table}    \caption{Variables and parameters of the model \eqref{eq:model}. }
    \label{tab:parameters}
    \centering
\begin{tabular}{r|l}
$x_i(t)$&opinion of agent $i$ at time $t$\\
$y_i(t)$&leadership of agent $i$ at time $t$\\
$\rho_i$&behavioral parameter that characterizes agent $i$\\
$s_i(x)$&local weighted average opinion for $i$, see \eqref{eq:si} \\
$G_i^{\pm}(x)$&signed distances with respect to thresholds for $i$, see \eqref{eq:g}\\
    \end{tabular}
\end{table} 

Building on the technical result presented above, we can now state our main result, which establishes sufficient conditions for the emergence of leaders. Table~\ref{tab:parameters} lists all the variables and parameters that appear in the main statements.

\begin{theorem}\label{theo:theo-leaders}
    Consider the opinion-leadership model \eqref{eq:model} under Assumption~\ref{ass:ass1}. Consider an agent $i$ with $\rho_i>1$ and $x_i(0) \in (-1,1)$. If either
    \begin{itemize}
        \item[(i)] Inequality \eqref{eq:Gi-hyp} holds and $(x(0),y(0))\in\Omega^+_i$, or
        \item[(ii)] Inequality \eqref{eq:Gi-hyp2} holds and $(x(0),y(0))\in\Omega^-_i$, 
    \end{itemize}
    with $\Omega_i^{\pm}$ defined in \eqref{eq:omega}, then 
    $\lim_{t\to \infty} y_i(t) = 1.$
\end{theorem}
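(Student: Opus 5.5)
The plan is to combine the positive invariance from Proposition~\ref{prop:prop1} with a quantitative strengthening: the trajectory does not merely stay in $\Omega_i^+$, it stays a uniform distance inside it. That distance then gives a uniform positive lower bound on the payoff difference in \eqref{eq:leader}. I treat case (i); case (ii) is symmetric, with $G_i^-$, $\ell_i$ and \eqref{eq:Gi-hyp2} in place of $G_i^+$, $u_i$ and \eqref{eq:Gi-hyp}.

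\emph{Step 1: a margin for $\dot G_i^+$.} By Proposition~\ref{prop:prop1}(i), $(x(t),y(t))\in\Omega_i^+$ for all $t\ge0$. Lemma~\ref{lemma:dot-yi} then gives that $y_i$ is non-decreasing, so $y_i(t)\ge y_i(0)>0$ throughout. I would revisit the derivative estimates in the proof of Proposition~\ref{prop:prop1}. There, using $x_i=a_i s_i$ in Case~1 and $x_i=-b_i s_i$ in Case~2, one obtains $\dot G_i^+\ge x_i(0)y_i(0)-a_i=:\eta_1>0$ and $\dot G_i^+\ge -1+(1+x_i(0))y_i(0)-b_i=:\eta_2>0$, respectively. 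Away from the surface $G_i^+=0$, the exact formula for $\dot G_i^+$ differs from the value used there only by a term proportional to $G_i^+$ (for instance, $(1-y_i)(s_i-x_i)$ versus $(1-y_i)(s_i-a_is_i)$). That term is bounded by a constant times $G_i^+$, uniformly on $\mathcal D$. Hence there exists $\delta>0$ such that $0<G_i^+(x)<\delta$ and $y_i\ge y_i(0)$ imply $\dot G_i^+\ge \eta:=\tfrac12\min\{\eta_1,\eta_2\}>0$, in either sign case of $s_i$. Since $G_i^+$ has a kink at $s_i=0$, I would argue with the one-sided derivatives in each regime; both satisfy the bound, so $G_i^+(x(t))$ is increasing whenever it lies in $(0,\delta)$. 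Consequently $G_i^+(x(t))\ge \delta':=\min\{\delta,G_i^+(x(0))\}>0$ for all $t\ge0$.

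\emph{Step 2: uniform positivity of the payoff gap.} Let $K:=\{(x,y)\in\mathcal D: G_i^+(x)\ge\delta'\}$, which is compact because $G_i^+$ is continuous. On $K$ we have $x_i\ge u_i(x)+\delta'\ge\delta'>0$, since $u_i\ge0$ when $\rho_i>1$. By the root characterization in the proof of Lemma~\ref{lemma:dot-yi}, $\Delta_i(x):=\rho_i x_i^2-(x_i-s_i(x))^2>0$ on $K$. By continuity and compactness, $\Delta_i\ge c>0$ on $K$, and hence $\Delta_i(x(t))\ge c$ for all $t$.

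\emph{Step 3: convergence.} From \eqref{eq:leader} we get $\dot y_i\ge c\,y_i(1-y_i)$ with $y_i(0)>0$. If $y_i(0)=1$ the claim is trivial. Otherwise, comparison with the logistic equation gives $y_i(t)\ge 1/\bigl(1+\tfrac{1-y_i(0)}{y_i(0)}e^{-ct}\bigr)\to1$. Together with $y_i\le1$ from Lemma~\ref{lemma:invariance}, this yields $y_i(t)\to1$.

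The main obstacle is Step~1. Proposition~\ref{prop:prop1} only controls $\dot G_i^+$ exactly on the surface $G_i^+=0$, and invariance alone does not prevent $G_i^+$ from decaying to $0$ asymptotically. The slack in the strict inequality \eqref{eq:Gi-hyp}, together with a uniform Lipschitz bound on the correction term near the surface, is what should supply the margin. If that perturbation argument becomes delicate at $s_i=0$, I would fall back on the following route. Suppose $y_i\to\bar y<1$. Then $\int_0^\infty\Delta_i(x(t))\,dt<\infty$, and since $\Delta_i(x(t))$ is uniformly continuous in $t$, Barbalat's lemma gives $\Delta_i\to0$, hence $G_i^+\to0$. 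This contradicts the local growth of $G_i^+$ near zero.
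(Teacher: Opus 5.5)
Your proof is correct, and it takes a genuinely different route from the paper's.

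The paper uses Proposition~\ref{prop:prop1} only qualitatively. Invariance gives $\dot y_i\ge0$, so $y_i\to y_i^*$. Assuming $y_i^*<1$, it picks times $t_k$ with $\dot x_i(t_k)\to0$ and $x(t_k)\to x^*$. Along these it obtains both $h_i(x^*)=0$, so $x_i^*=u_i(x^*)\le b_i$, and the stationarity relation $0=-x_i^*+(1-y_i^*)s_i(x^*)+y_i^*x_i(0)$. The latter contradicts $y_i(0)(1+x_i(0))>1+b_i$, which is the second term of the maximum in \eqref{eq:Gi-hyp}.

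You instead upgrade Proposition~\ref{prop:prop1} to a quantitative invariance and then integrate a logistic inequality. Your Step~1 is easier than you fear. Write $G_i^+=\min\{G_1,G_2\}$ with $G_1:=x_i-a_is_i$ and $G_2:=x_i+b_is_i$. Then exactly $\dot G_k=-G_k+E_k$, where $E_k$ is the expression bounded in the proof of Proposition~\ref{prop:prop1}. Using $y_i\ge y_i(0)$ and $x_i(0)>0$:
\begin{itemize}
\item $E_1\ge\eta_1$ whenever $s_i\ge0$;
\item $E_2\ge\eta_2$ for every $s_i\ge-1$, hence always.
\end{itemize}
So at the kink both active branches satisfy the estimate, and a Dini-derivative comparison gives $G_i^+(x(t))\ge\min\{G_i^+(x(0)),\eta_1,\eta_2\}$. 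Likewise, in Step~2 the factorization $\Delta_i=(\rho_i-1)(x_i-u_i)(x_i-\ell_i)\ge(\rho_i-1)(G_i^+)^2$ yields an explicit $c$ without compactness.

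Your route gives a quantitative conclusion: a logistic lower bound on $y_i$, hence exponential decay of $1-y_i$ at a computable rate. It also bypasses the paper's step ``$y_i$ monotone and convergent with $\dot y_i$ continuous implies $\dot y_i\to0$''. As stated, that step needs a Barbalat-type uniform-continuity argument, which is available here because $\ddot y_i$ is bounded on $\mathcal D$.

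The paper's route is more economical. It treats Proposition~\ref{prop:prop1} as a black box, needs no estimates off the surface $G_i^+=0$, and never deals with the nonsmoothness of $G_i^+$ at $s_i=0$. It also shows that, once invariance is known, only $y_i(0)(1+x_i(0))>1+b_i$ is needed to exclude a limit below~$1$.

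Your Barbalat fallback is closer in spirit to the paper's contradiction argument, but you do not need it.
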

\begin{proof}
    The proof is divided into two steps. First, we show that $y_i(t)$ admits a finite limit as $t\to\infty$. Then, we argue by contradiction: assuming that such a limit satisfies $y_i^*<1$, we show that this cannot occur if either (i) or (ii) holds.

    By Proposition \ref{prop:prop1}, for all $t\ge 0$, either $G_i^+(x(t))>0$ or $G_i^-(x(t))<0$, depending on the initial condition. By Lemma \ref{lemma:dot-yi}, this implies that $\dot y_i(t) \geq 0$ for all $t\ge 0$. 
    Moreover, by Lemma~\ref{lemma:invariance}, $y_i(t)\in[0,1]$ for all $t\ge 0$. Hence $y_i(t)$ is bounded and non-decreasing. By the monotone convergence theorem \cite{rudin1976principles}, the limit $\ds \lim_{t \to \infty} y_i(t) =y_i^*\in [y_i(0), 1]$ exists. 
    
    Assume by contradiction that $y_i^* < 1$. 
    Since $y_i(0)>0$ from \eqref{eq:Gi-hyp}--\eqref{eq:Gi-hyp2}, and $y_i$ is non-decreasing, $y_i^*\in(0,1)$ and hence $y_i^*(1-y_i^*)>0$. Given that $y_i(t)$ is monotone and convergent and $\dot y_i$ is continuous, it follows that $\dot y_i(t)\to 0$ as $t\to\infty$. In particular, for any sequence $t_k\to\infty$, $\dot y_i(t_k)\to 0$.
    
    Since $x_i(t)\in[-1,1]$ for all $t\ge 0$ by Lemma~\ref{lemma:invariance} and $x_i$ is continuously differentiable, there exists a sequence $t_k\to\infty$ such that $\dot x_i(t_k)\to 0$. Moreover, since $x(t)\in[-1,1]^n$ for all $t\ge 0$ and $[-1,1]^n$ is compact, we can extract a subsequence such that $x(t_k)\to x^* \in [-1,1]^n$ and $s_i(x(t_k))\to s_i(x^*)$ for agent $i$ satisfying the assumptions. 
    Define $h_i(x):= \rho_i {x_i}^2-(x_i-s_i(x))^2.$
    Therefore, passing to the limit along $t_k$ in \eqref{eq:leader} yields
    \begin{align*}
        0=\lim_{k\to\infty}\dot y_i(t_k)
        &=\lim_{k\to\infty} y_i(t_k)\bigl(1-y_i(t_k)\bigr)\,h_i(x(t_k))\\
        &= y_i^*(1-y_i^*)\lim_{k\to\infty} h_i(x(t_k)),
    \end{align*}
    and $y_i^*(1-y_i^*)>0$ implies $\ds \lim_{k\to\infty} h_i(x(t_k))=0$.
    By continuity with respect to $(x_i,s_i)$,
    \begin{equation}\label{eq:h-limit}
        \rho_i {x_i^*}^{2}-\bigl(x_i^*-s_i(x^*)\bigr)^2=0.
    \end{equation}
    This implies that $x_i^*=u_i(x^*)$ or $x_i^*=\ell_i(x^*)$. In particular, since $|s_i(x^*)|\le 1$, then $|x_i^*|\le b_i$, defined in \eqref{eq:ai-bi}.
    Moreover, since $\dot x_i(t_k)\to 0$, considering \eqref{eq:opinion} and passing to the limit along $t_k$ yields
    \begin{equation}\label{eq:xdot-limit}
        0= -x_i^*+(1-y_i^*)s_i(x^*)+y_i^*x_i(0).
    \end{equation}
    We now treat the two cases (i) and (ii) separately.

    In Case (i), by Proposition~\ref{prop:prop1}, $G_i^+(x(t_k))>0$ and therefore $G_i^+(x^*)\ge 0$. Combining this with \eqref{eq:h-limit} implies necessarily that $x_i^*=u_i(x^*)$ (the alternative root would violate $G_i^+(x^*)\ge 0$).
    Using \eqref{eq:xdot-limit}, $s_i(x^*)\ge -1$, $|x_i^*|\le b_i$, and $y_i^* \in [y_i(0),1)$, we obtain the following contradiction, i.e.,  Case (i) is impossible:
    \begin{align*}
        0 &= -x_i^*+(1-y_i^*)s_i(x^*)+y_i^*x_i(0)\\
        &\ge -x_i^* -(1-y_i^*) + y_i^*x_i(0)\\
        &= -x_i^* -1 + y_i^*(1+x_i(0))\\
        &\ge -b_i -1 + y_i(0)(1+x_i(0))>0,
    \end{align*}
    where the last inequality is ensured from \eqref{eq:Gi-hyp}.

    In Case (ii), by Proposition~\ref{prop:prop1}, $G_i^-(x(t_k))<0$ and $G_i^-(x^*)\le 0$. Together with \eqref{eq:h-limit}, this implies $x_i^*=\ell_i(x^*)$. From \eqref{eq:xdot-limit}, using $s_i(x^*)\le 1$, $|x_i^*|\le b_i$, and $y_i^*\in [y_i(0),1)$, we obtain the following contradiction:
    \begin{align*}
        0 &= -x_i^*+(1-y_i^*)s_i(x^*)+y_i^*x_i(0)\\
        &\le -x_i^* +1 - y_i^*(1-x_i(0))\\
        &\le b_i +1 - y_i(0)(1-x_i(0))<0,
    \end{align*}
    where the last inequality follows from \eqref{eq:Gi-hyp2}. 
    
    Hence, in both cases the assumption $y_i^*<1$ leads to a contradiction. 
    Therefore, we conclude that $y_i^*=1$.
\end{proof}

Theorem~\ref{theo:theo-foll} establishes sufficient conditions for emergence of a leader. Interestingly, a key condition is that  $\rho_i>1$. The parameter $\rho_i$ regulates the relative weight between individual conviction and social misalignment in the leadership dynamics.  When $\rho_i>1$, strong opinions may compensate for the cost of being misaligned with the neighborhood, thereby enabling the emergence of leaders under appropriate conditions. When $\rho_i<1$, social alignment always plays a stronger role for the agent, and leadership tends to decrease unless the agent is already well aligned with their neighborhood, typically promoting more consensus-oriented outcomes. 
The impact of $\rho_i$ on the conditions provided by Theorem~\ref{theo:theo-leaders} can be observed in Fig.~\ref{fig:rho}.

\begin{figure}
    \centering
    \includegraphics[width=0.7\linewidth]{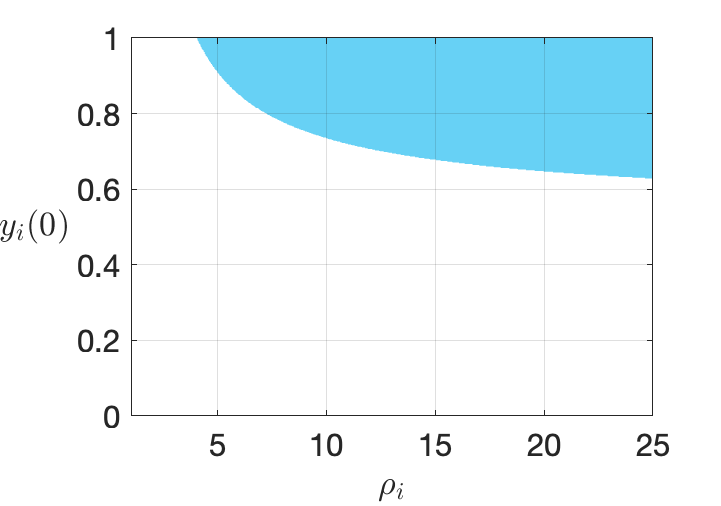}
    \caption{Admissible region in the $(\rho_i, y_i(0))$ plane for a fixed agent $i$. The shaded area represents the set of values of the pair $(\rho_i, y_i(0))$ for which, with $x_i(0)$ fixed, the conditions of Theorem~\ref{theo:theo-leaders} are satisfied.}
    \label{fig:rho}
\end{figure}

\begin{remark}
Conditions \eqref{eq:Gi-hyp}--\eqref{eq:Gi-hyp2} depend only on the agent's initial opinion and leadership level, their characteristics captured by parameter $\rho_i$, and neighbors' initial opinions, but are independent of the subsequent evolution of the other agents. Hence, Theorem~\ref{theo:theo-leaders} characterizes agents whose initial resistance to social influence make them inevitably emerge as leaders in the social group.
\end{remark}
\begin{remark}\label{rem:L}
    The conditions provided by Theorem~\ref{theo:theo-leaders} are sufficient but not necessary. Indeed, agents who do not satisfy them may still become leaders due to the collective evolution of opinions. Given $\mc V_1$ characterized in Lemma~\ref{lemma:equilibria}(ii), let $\mathcal L \subseteq \mathcal V_1$ denote the subset of agents satisfying the assumptions of Theorem~\ref{theo:theo-leaders}. Then, every agent in $\mathcal L$ is guaranteed to asymptotically emerge as a leader.
\end{remark}
Motivated by these remarks, we now complement the previous result by deriving conditions that guarantee the emergence of followers.

\subsection{Identifying agents who necessarily act as followers}\label{sec:follower}

In this section, we derive conditions to establish who necessarily emerges as a follower. 
To this end, we exploit the set $\mathcal L$ introduced in Remark~\ref{rem:L}, consisting of agents that are guaranteed to asymptotically emerge as leaders, i.e., that satisfy condition (i) or (ii) in Theorem \ref{theo:theo-foll}. For every $i\in\mathcal L$, one has $y_i(t)\to 1$ and $x_i(t)\to x_i(0)$. Hence, for every $\varepsilon>0$, there exists $T_\varepsilon\geq 0$ such that
\begin{equation}\label{eq:leaders-eps}
|x_i(t)-x_i(0)|\le \varepsilon,
\qquad \forall i\in\mathcal L,\ \forall t\ge T_\varepsilon.
\end{equation}

The following result gives sufficient conditions ensuring that an agent in $\mathcal F:=\mc V\setminus\mc L$ does not asymptotically emerge as a leader.
First, let us define 
\be \label{eq:etaj} \eta_j:=\sum\nolimits_{i\in\mathcal L}W_{ji}>0, \quad \forall j \in \mc F.\ee

\begin{theorem}\label{theo:theo-foll}
    Consider the opinion-leadership model \eqref{eq:model} under Assumption~\ref{ass:ass1}. Suppose there is a nonempty set $\mc L$ of agents satisfying the assumptions of Theorem~\ref{theo:theo-leaders}.
    Let $j\in\mathcal F$ and assume that there exist $\varepsilon>0$ and $m_j\in(0,1]$ such that the following three conditions are jointly satisfied:
    \begin{subequations}
    \begin{align}
        &-m_j <  \,x_j(0)<0, \label{eq:xj0}\\
       & 1\!-\!(1-\varepsilon)\eta_j -\! m_j < \! \sum_{i\in\mathcal L} W_{ji}x_i(0) \!<\! (1-\varepsilon)\eta_j\!-\!1, \label{eq:hyp2} \\
       & \rho_j m_j^2< \!\Big(\!- 1+(1-\varepsilon) \eta_j- \!\sum_{i\in\mathcal L} W_{ji}x_i(0) \! -m_j\Big)^2. \label{eq:mj-cond2}
    \end{align}
    \end{subequations}
   Then, $\lim_{t\to \infty} y_j(t) \in [0,1).$
\end{theorem}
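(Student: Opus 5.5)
The plan is to argue by contradiction: assume $y_j(t)\to 1$ and show that, eventually, the payoff difference $h_j(x):=\rho_j x_j^2-(x_j-s_j(x))^2$ is bounded above by a strictly negative constant. This forces $\dot y_j<0$, which is incompatible with $y_j(t)\uparrow 1$.

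\emph{Step 1: if $y_j\to1$, then $x_j\to x_j(0)$.} From \eqref{eq:opinion}, write $\dot x_j=-y_j(x_j-x_j(0))-(1-y_j)(x_j-s_j(x))$. By Lemma~\ref{lemma:invariance}, the second term is bounded in modulus by $2(1-y_j(t))\to0$. For $t$ large, $y_j(t)\ge 1/2$, so $e:=x_j-x_j(0)$ satisfies $\dot e\le -\tfrac12 e+o(1)$ when $e>0$, and symmetrically when $e<0$. A Gronwall/comparison argument then gives $e(t)\to0$. Since \eqref{eq:xj0} gives $x_j(0)\in(-m_j,0)$ with strict inequalities, there are $\delta>0$ and $T_1$ such that $|x_j(t)|\le m_j-\delta$ and $x_j(t)<0$ for all $t\ge T_1$.

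\emph{Step 2: a lower bound on the misalignment.} For $t\ge T_\varepsilon$, \eqref{eq:leaders-eps} gives $x_i(t)\le x_i(0)+\varepsilon$ for $i\in\mathcal L$. Non-leaders satisfy $x_k(t)\le1$, and their total weight is $1-\eta_j$. Hence
\[
s_j(x(t))\le \sum\nolimits_{i\in\mathcal L}W_{ji}x_i(0)+\varepsilon\eta_j+1-\eta_j .
\]
Combining this with $x_j(t)\ge -m_j$ from Step 1 yields
\[
x_j(t)-s_j(x(t))\ge -1+(1-\varepsilon)\eta_j-\sum\nolimits_{i\in\mathcal L}W_{ji}x_i(0)-m_j=:D_j .
\]
The two-sided inequality \eqref{eq:hyp2} is meant to ensure that this quantity has the correct sign, namely $D_j>0$. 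Then $(x_j-s_j)^2\ge D_j^2$. I would double-check here that the $\varepsilon$ terms match the paper's $(1-\varepsilon)\eta_j$ form, possibly using the left inequality of \eqref{eq:hyp2} to handle the other sign of $x_j-s_j$. Using \eqref{eq:mj-cond2} and $|x_j|<m_j$, we get $h_j(x(t))\le \rho_j m_j^2-D_j^2=:-\gamma<0$ for all $t\ge \max\{T_1,T_\varepsilon\}$.

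\emph{Step 3: contradiction and existence of the limit.} From Step 2, $\dot y_j\le-\gamma y_j(1-y_j)\le 0$ eventually. So $y_j$ is eventually non-increasing and strictly below $1$: indeed $y_j(t)<1$ for all $t$ unless $y_j(0)=1$, since $y_j=1$ is invariant. A function that is eventually non-increasing and strictly below $1$ cannot converge to $1$, which is the contradiction. The main obstacle is making the statement precise: the theorem asserts that the limit exists and lies in $[0,1)$. I would handle this by showing that the region $\{|x_j|<m_j,\ x_j-s_j>D_j\}$ becomes positively invariant after $T_\varepsilon$, adapting the first-exit-time argument of Proposition~\ref{prop:prop1}. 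In that region $y_j$ is monotone, so the limit exists and is at most $y_j(T)<1$. If invariance cannot be proved cleanly, I would fall back to the weaker conclusion that $y_j\not\to1$, obtained via Steps 1--3.
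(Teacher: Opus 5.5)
Step 1 of your proposal is fine. Step 2 fails at exactly the point you flagged, and it cannot be repaired.

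\textbf{The sign of $D_j$.} Write $\Sigma_j:=\sum_{i\in\mathcal L}W_{ji}x_i(0)$. Your constant is $D_j=-1+(1-\varepsilon)\eta_j-\Sigma_j-m_j=-m_j-\bar s_j$, which is precisely the quantity squared in \eqref{eq:mj-cond2}. The left inequality in \eqref{eq:hyp2} says $\Sigma_j>1-(1-\varepsilon)\eta_j-m_j$, i.e.\ $\underline s_j>-m_j$. Since $\eta_j\le 1$ and $\varepsilon>0$, this gives
\[
D_j<-2\bigl(1-(1-\varepsilon)\eta_j\bigr)<0 .
\]
Equivalently, \eqref{eq:hyp2} says $-m_j<\underline s_j<\bar s_j<0$: it traps the eventual values of $s_j(x(t))$ inside the same interval $(-m_j,0)$ where you have located $x_j(t)$. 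So $x_j-s_j\ge D_j$ is a lower bound by a negative number and says nothing about $(x_j-s_j)^2$. The bound $h_j\le\rho_j m_j^2-D_j^2$ therefore does not follow. The left inequality of \eqref{eq:hyp2} is the cause of the problem, not a remedy for it.

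\textbf{The statement is false.} Take $n=2$, $W_{12}=W_{21}=1$, $W_{11}=W_{22}=0$, $x_1(0)=x_2(0)=-\tfrac12$, $\rho_1=100$, $y_1(0)=\tfrac{9}{10}$, $\rho_2=\tfrac15$, $y_2(0)=\tfrac12$.
\begin{itemize}
\item Agent~1 satisfies Theorem~\ref{theo:theo-leaders}(ii): $\ell_1(x(0))=-\tfrac{1}{22}>x_1(0)$, and the right-hand side of \eqref{eq:Gi-hyp2} equals $\tfrac{20}{27}<\tfrac{9}{10}$.
\item Agent~2 cannot be in $\mathcal L$ because $\rho_2<1$. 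So $\mathcal L=\{1\}$, $j=2$ and $\eta_2=1$.
\item With $\varepsilon=\tfrac{1}{100}$ and $m_2=1$, conditions \eqref{eq:xj0}--\eqref{eq:mj-cond2} read $-1<-\tfrac12<0$, $-\tfrac{99}{100}<-\tfrac12<-\tfrac{1}{100}$, and $\tfrac15<\bigl(\tfrac{51}{100}\bigr)^2$. All hold.
\item Since $x(0)$ is a consensus, both terms of $\dot x_i$ vanish, so $x(t)\equiv x(0)$ whatever $y$ does. Hence $\dot y_2=\tfrac{1}{20}y_2(1-y_2)$ and $y_2(t)\to 1$.
\end{itemize}

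\textbf{The paper's proof has the same flaw.} It takes a direct route: it first shows that $(-m_j,0)$ is eventually reached and is positively invariant for $x_j$, and that part is fine. Its last step then writes $|x_j-s_j|\ge-\bar s_j-m_j$ and squares. This is your invalid move again, since $-\bar s_j-m_j=D_j<0$.

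\textbf{What a repair needs.} The hypotheses themselves must change, for instance to $\sqrt{\rho_j}\,|x_j(0)|<\min_{s\in[\underline s_j,\bar s_j]}|x_j(0)-s|$. Under that condition your Steps~1 and~3 do give $y_j\not\to 1$. Existence of $\lim_{t\to\infty}y_j(t)$ would still need a separate argument.
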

\begin{proof}
    The proof consists of three steps. First, we derive eventual bounds on the local weighted average opinion $s_j(x(t))$ by exploiting the asymptotic behavior of the agents in $\mathcal L$. Second, we show that the interval $(-m_j,0)$ is eventually reached and then positively invariant for $x_j$. Third, we prove that, once $x_j(t)$ remains in such an interval, the quantity $h_j(x(t))$ is strictly negative, which implies that $y_j(t)$ cannot converge to $1$.

    Fix $\varepsilon>0$, and let $T_\varepsilon\ge 0$ be such that \eqref{eq:leaders-eps} holds. First, for any $j\in\mathcal F$ and for all $t\geq0$, decompose
    \be \label{eq:sj-1}s_j(x(t))=\sum\nolimits_{k\in\mathcal F}W_{jk}x_k(t)+\sum\nolimits_{i\in\mathcal L}W_{ji}x_i(t).\ee
    Regarding the influence by agents in $\mc F$, by Lemma \ref{lemma:invariance}, $x(t)\in[-1,1]^n$; then, using \eqref{eq:etaj} we obtain 
    \be \label{eq:sj-2}-1\!+\! \eta_j \!=\! -\!\! \sum_{k\in\mathcal F}\! W_{jk}\leq \! \sum_{k\in\mathcal F}\! W_{jk}x_k(t)\le \! \sum_{k\in\mathcal F}\! W_{jk}\!=\! 1\!-\!\eta_j,\ee
    for all $t \geq 0$, since $W$ is row-stochastic. 
    Moreover, regarding the influence by agents in $\mc L$, by \eqref{eq:leaders-eps}, for all $t\geq T_\varepsilon$,
    \begin{subequations}
        \be \label{eq:sj-3a} \sum_{i\in\mathcal L}W_{ji}x_i(t)\le \sum_{i\in\mathcal L}W_{ji} \bigl(x_i(0)+\varepsilon\bigr) .\ee
        \be \label{eq:sj-3b} \sum_{i\in\mathcal L}W_{ji}x_i(t)\ge \sum_{i\in\mathcal L}W_{ji} \bigl(x_i(0)-\varepsilon\bigr) .\ee        
    \end{subequations}
    Combining \eqref{eq:sj-2} and \eqref{eq:sj-3a} in \eqref{eq:sj-1} yields for all $t\ge T_\varepsilon$,
    \begin{subequations}
    \begin{align}
        s_j(x(t)) \le \bar s_j := 1-(1-\varepsilon)\eta_j+\sum_{i\in\mathcal L}W_{ji}x_i(0), \label{eq:upper-bounds} \\
        s_j(x(t)) \ge \underline s_j := -1 + (1-\varepsilon) \eta_j + \sum_{i\in\mathcal L}W_{ji}x_i(0). \label{eq:lower-bounds}
    \end{align}
    \end{subequations}
    
    We now show that, once $x_j(t)$ enters the interval $(-m_j,0)$, it cannot leave it.
    Assume that $x_j(T_\varepsilon)\in(-m_j,0)$ and suppose by contradiction that there exists $t^*:=\inf\{t\ge T_\varepsilon:\ x_j(t)\notin(-m_j,0)\}$. By continuity, $x_j(t^*)\in\{-m_j,0\}$ and $x_j(t)\in(-m_j,0)$ for $t\in[T_\varepsilon,t^*)$.
    If $x_j(t^*)=0$, then necessarily $\dot x_j(t^*)\ge0$. However, from \eqref{eq:opinion},
    $$\dot x_j(t^*)=(1-y_j(t^*)) s_j(x(t^*)) + y_j(t^*) x_j(0).$$
    Using \eqref{eq:xj0}, \eqref{eq:hyp2}, \eqref{eq:upper-bounds}, and $y_j(t^*)\in[0,1]$, we obtain $\dot x_j(t^*)<0$, which is a contradiction.
    If $x_j(t^*)=-m_j$, then necessarily $\dot x_j(t^*)\le0$.
    Again from \eqref{eq:opinion},
    $$\dot x_j(t^*)= m_j+(1-y_j(t^*))\,s_j(x(t^*)) + y_j(t^*)\,x_j(0).$$
    By \eqref{eq:lower-bounds} and $y_j(t^*)\in[0,1]$,
    $$(1-y_j(t^*))s_j(x(t^*)) + y_j(t^*)x_j(0)\ge \min\{\underline s_j,x_j(0)\}.$$
    Then \eqref{eq:xj0} and \eqref{eq:hyp2} imply $\min\{\underline s_j,x_j(0)\}>-m_j$, hence $\dot x_j(t^*)>0$, again a contradiction.
    Therefore, if $x_j(T_\varepsilon)\in(-m_j,0)$, then $x_j(t)\in(-m_j,0)$, for all $t\geq T_\varepsilon$.

    We next show that there exists a finite time $T\ge T_\varepsilon$ such that $x_j(T)\in(-m_j,0)$.
    Assume by contradiction that $x_j(t)\notin(-m_j,0)$ for all $t\ge T_\varepsilon$.
    If $x_j(t)\le -m_j$ for all $t\ge T_\varepsilon$, then
    \begin{align*}
        \dot x_j(t) &= -x_j + (1-y_j(t))s_j(x(t))+y_j(t)x_j(0) \\
        &\ge m_j + \min \{\underline{s}_j,x_j(0)\} >0,
    \end{align*}
    which follows from \eqref{eq:xj0}, \eqref{eq:hyp2}, \eqref{eq:lower-bounds}, and $y_j(t)\in[0,1]$. Hence, the vector field points strictly to the right whenever $x_j(t)\le -m_j$, which rules out the possibility that the trajectory remains there for all $t\geq T_\varepsilon$.
    If $x_j(t)\ge 0$ for all $t\ge T_\varepsilon$, then
    \begin{align*}
        \dot x_j(t) &= -x_j + (1-y_j(t))s_j(x(t))+y_j(t)x_j(0) <0,
    \end{align*}
    which follows from \eqref{eq:xj0}, \eqref{eq:hyp2}, \eqref{eq:upper-bounds}, and $y_j(t)\in[0,1]$. Thus, the vector field points strictly to the left whenever $x_j(t)\geq 0$, which again yields a contradiction. Hence, there exists a time $T\ge T_\varepsilon$ such that $x_j(T)\in(-m_j,0)$. By the previous argument, this implies $x_j(t)\in(-m_j,0)$ for all $t\ge T $.

    For all $t\geq T$, using the reverse triangle inequality together with $x_j(t)\in(-m_j,0)$ and $s_j(x(t))\leq \overline s_j<0$, we obtain
    \begin{align*}
    |x_j(t)-s_j(x(t))| \ge \bigl||s_j(x(t))|-|x_j(t)|\bigr| \ge -\bar s_j-m_j.
    \end{align*}
    Therefore,
    \begin{align*}
        h_j(x(t))&=\rho_j x_j(t)^2-\bigl(x_j(t)-s_j(x(t))\bigr)^2\\
        &\le \rho_j m_j^2-(-\bar s_j-m_j)^2<0,
    \end{align*}
    where the last inequality follows from \eqref{eq:mj-cond2}. Hence $\dot y_j(t)=y_j(t)(1-y_j(t))h_j(x(t))\le 0$ for all $t\ge T$, thus $y_j$ is non-increasing on $[T,\infty)$. Since $y_j(t)\in[0,1]$ for all $t$ (Lemma \ref{lemma:invariance}), it converges to a limit $y_j^*:= \lim_{t\to \infty} y_j(t) \in[0,1]$.

    Finally, since $y_j(0)<1$ and $y_j=1$ is an equilibrium of \eqref{eq:leader}, uniqueness of solutions implies that $y_j(t)<1$ for all $t\geq 0$. Hence, $y_j^*<1$, which concludes the proof.
\end{proof}

\begin{remark}
    In Theorem~\ref{theo:theo-foll}, the role of the set $\mathcal L$ is crucial, as the agents in $\mathcal L$ generate an asymptotically persistent contribution to the local weighted average opinion $s_j(x(t))$. Conditions \eqref{eq:xj0}--\eqref{eq:hyp2} guarantee that, for sufficiently large times, the opinion of agent $j$ is driven into the interval $(-m_j,0)$ and remains confined there. Condition \eqref{eq:mj-cond2} then ensures that, within such an interval, the misalignment term dominates the conviction term, so that acting as a leader is not advantageous.
\end{remark}

Figure~\ref{fig:ex3} reports a simple simulation on a network with two nodes, to illustrate the two behaviors predicted by Theorems~\ref{theo:theo-leaders} and~\ref{theo:theo-foll}. Specifically, the agent whose trajectories are in blue satisfies the conditions in Theorem~\ref{theo:theo-leaders} and, consistently, emerges as a leader
 with $y_i(t)\to 1$. On the contrary, the other agent (whose trajectories are in orange) satisfies the conditions in Theorem~\ref{theo:theo-leaders} and remains a follower, as predicted by our theoretical findings.

\begin{figure}
    \centering
    \subfloat[Opinions]{\includegraphics[width=0.5\linewidth]{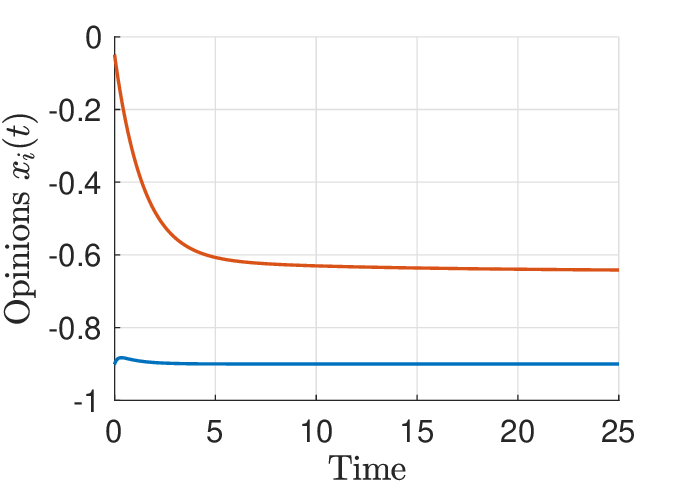}}
    \subfloat[Leadership]{\includegraphics[width=0.5\linewidth]{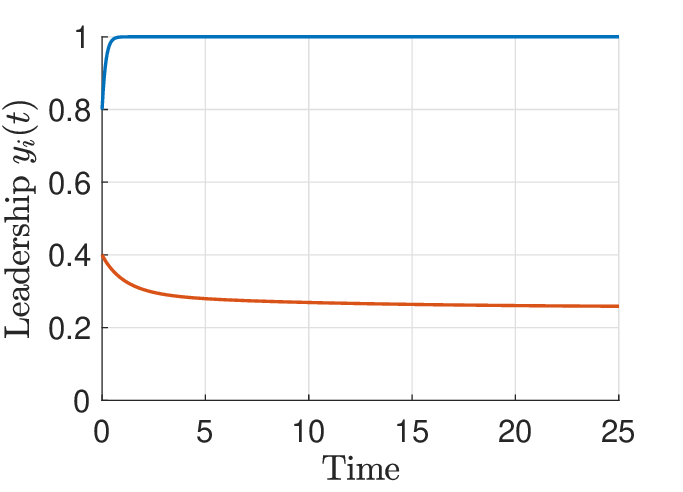}}
    \caption{Numerical simulation of the model \eqref{eq:model} with $n=2$ agents showing the trajectories of (a) opinion and (b) leadership. One agent (blue) satisfies the assumptions of Theorem~\ref{theo:theo-leaders} and becomes a leader, whereas the other (red) satisfies the assumptions of Theorem~\ref{theo:theo-foll} and remains a follower.}
    \label{fig:ex3}
\end{figure}

\section{Conclusion}\label{sec:conclusion}
We proposed a dynamical model for the co-evolution of opinions and leadership in social networks, where leadership is treated as an endogenous state variable rather than as a fixed exogenous trait. The main contribution of the paper is the derivation of sufficient conditions for the emergence of leaders and followers, shedding light onto nontrivial social phenomena. 
Future research directions include integrating the model with empirical data for validation and calibration, as well as designing intervention strategies to shape leadership emergence through opinions or network structure and promote fairness across the social network.

\bibliography{leader_bib}
\bibliographystyle{IEEEtran}
\end{document}